\documentclass[letterpaper, 10 pt, conference]{ieeeconf}  

\IEEEoverridecommandlockouts                              
\usepackage{amssymb}
\usepackage{tikz-cd}
\usetikzlibrary{arrows}
\usepackage{mathrsfs}

\usepackage{amsthm}
\usepackage{amsmath}
\usepackage{graphicx}
\usepackage{subcaption} 
\usepackage{enumerate}

\usepackage{centernot}

\newtheoremstyle{mystyle}
  {}
  {}
  {\itshape}
  {}
  {\bfseries}
  {.}
  { }
  {}

\theoremstyle{mystyle}

\newtheorem{theorem}{Theorem}[section]

\newtheoremstyle{mystyleUpshape}
  {}
  {}
  {}
  {}
  {\bfseries}
  {.}
  { }
  {}

\theoremstyle{mystyleUpshape}
\newtheorem{theoremUpshape}{Theorem}[section]

\newtheorem{remark}[theoremUpshape]{Remark}

\newcommand\xqed[1]{%
  \leavevmode\unskip\penalty9999 \hbox{}\nobreak\hfill
  \quad\hbox{#1}}
\newcommand\exampleEnd{\xqed{$\circ$}}

\usepackage[backend=bibtex,
style=ieee,
bibencoding=ascii,
]{biblatex}
\makeatletter
\newbox\dottedarrow@box
\setbox\dottedarrow@box\hbox
  {%
    \begin{tikzpicture}
      \draw[dotted,->] (0,0) -- (1.5em,0);
    \end{tikzpicture}%
  }
\newcommand*\dottedarrow
  {\relax\ifmmode\expandafter\dottedarrow@m\else\expandafter\dottedarrow@t\fi}
\newcommand*\dottedarrow@t[1][1.5em]
  {\resizebox{#1}{!}{\raisebox{.5ex}{\usebox\dottedarrow@box}}}
\newcommand*\dottedarrow@m[1][]
  {%
    \if\relax\detokenize{#1}\relax
      \mathchoice
        {\dottedarrow@t}
        {\dottedarrow@t}
        {\dottedarrow@t[1.1em]}
        {\dottedarrow@t[0.9em]}%
    \else
      \dottedarrow@t[#1]%
    \fi
  }
\makeatother

\title{\LARGE \bf
First-order plants are globally asymptotically stable\\ under antithetic feedback
}

\author{Wouter Jongeneel
\thanks{\today. The author is with KTH Royal Institute of Technology and Digital Futures, Stockholm, Sweden. Contact: \texttt{\{wouterjo\}}\texttt{@kth.se}. This work is supported by Digital Futures through the project ``\textit{Programmability of Cells}''. This work is part of a project on \textit{The Future of Open Problems in Systems \& Control}. The Lyapunov function \eqref{equ:Lyap:main} we use in this work was found iteratively through Claude Fable 5.1, this note is written by the author.} 
}

\begin{document}

\maketitle
\thispagestyle{empty}
\pagestyle{empty}

\begin{abstract}
We show that first-order linear plants are globally asymptotically stable under antithetic feedback. This is shown by constructing an explicit smooth, convex Lyapunov function and by appealing to a Krasovskii-LaSalle invariance principle. 
\end{abstract}

\section{Introduction}
Antithetic feedback, as pioneered by Briat, Gupta and Khammash~\cite{ref:briat2016antithetic}, is the biological manifestation of integral control. In this note we study the most basic version, that is, with just a single node representing a first-order linear plant. Despite the corresponding ODE being merely a $3$-dimensional bilinear system, global stability remains elusive, \textit{e.g.}, see \cite{ref:olsman2020antithetic} for further comments. Below, we provide a Lyapunov function and conclude on global asymptotic stability, affirmatively.

\section{Antithetic feedback}
We study a dynamical system of the form
\begin{equation}
    \label{equ:ideal:IAF}
    \begin{aligned}
    \dot{x}&= - \gamma x + k z_1\\
    \dot{z}_1&= r - \mu z_1 z_2\\
        \dot{z}_2&= \theta x - \mu z_1 z_2
\end{aligned}
\end{equation}
Here, ``$x$" should be understood as the plant state, whereas ``$z_1$" and ``$z_2$" are the controller species (actuator and sensor, respectively), see Figure~\ref{fig} (I). The parameters $\gamma,k,r,\mu,\theta$ are all elements of $\mathbb{R}_{>0}$. In this sense, the plant is understood as the linear first-order system $\dot{x}=-\gamma x + u$, where $u$ is replaced by part of the antithetic feedback scheme. 

To elucidate that~\eqref{equ:ideal:IAF} should be understood as integral control, one may now study $\dot{x}=-\gamma x + k z_1 +d$, for some disturbance $d\in \mathbb{R}_{\geq 0}$. Compactly expressed, we denote the vector field \eqref{equ:ideal:IAF} by $F$. 
Then, consider $\dot{z}_1-\dot{z}_2=r-\theta x$ and see that $F$ has a single equilibrium, being $x^*=r/\theta$, $z_1^*=(\gamma x^*-d)/k$, $z_2^*=rk/(\mu (\gamma x^*-d))$. Note in particular that $x^*$ is independent of the disturbance $d$ and the ``model parameters" $\gamma$ and $k$. Hence, if $\dot{z}_1-\dot{z}_2\to 0$, the interpretation as integral feedback follows. Additionally, see that $\mu z_1^*z_2^*=r$, the feedback term \textit{only} depends on $r$, asymptotically. 

In what follows, we will, for the time being, not work with $d$ (\textit{i.e.}, we set $d=0$) and focus on global stability of the equilibrium. We return to disturbances in Remark~\ref{rem:d}.  

It is well known that the equilibrium is locally exponentially stable under $F$, \textit{e.g.}, linearizing around $(x^*,z_1^*,z_2^*)$ yields
\begin{align*}
    \mathrm{D}F(x^*,z_1^*,z_2^*) = \begin{pmatrix}
        -\gamma & k & 0\\
        0 & -\mu z_2^* & -\mu z_1^*\\
        \theta & -\mu z_2^* & -\mu z_1^*
    \end{pmatrix}
\end{align*}
We can compute the corresponding characteristic polynomial, being $p_1(s) := s^3 + a_2 s^2 + a_1 s + a_0$, for $a_2 = (\gamma + \mu (z_1^*+z_2^*))$, $a_1 = \gamma \mu (z_1^* + z_2^*)$ and $a_0 = k\theta \mu z_1^*$. Then, a straightforward computation and the Routh-Hurwitz stability criterion (\textit{i.e.}, $a_2a_1 - a_0 > 0$) implies that $(x^*,z_1^*,z_2^*)$ is locally exponentially stable under \eqref{equ:ideal:IAF}, for any set of parameters.

To continue, understanding the \textit{global} dynamics of \eqref{equ:ideal:IAF} turns out to be hard. Numerical global studies are provided in \cite{ref:agrawal2019some}, but to the best of our knowledge, no global analytical results exist for this seemingly simple case. This is a short note and we do not intend to be complete, but we do mention the pioneering work \cite{ref:wafi2026boundedness} on proving that all antithetic feedback controllers result in bounded solutions. We also mention that most work is concerned with models more intricate than \eqref{equ:ideal:IAF}, we study \eqref{equ:ideal:IAF} due to links with the p53-Mdm2 interaction \cite{ref:WJCDC26}. Looking at general plants, we point out that especially 2-dimensional plants received more attention, \textit{e.g}, see \cite{ref:olsman2019hard,ref:margaliot2025compact}. For example, it is known that for the 2-dimensional plants, oscillations can occur \cite{ref:briat2016antithetic,ref:margaliot2025compact}. 

Returning to \eqref{equ:ideal:IAF}, our main result is as follows. 

\begin{figure*}[t!]
    \centering   \includegraphics[scale=0.16]{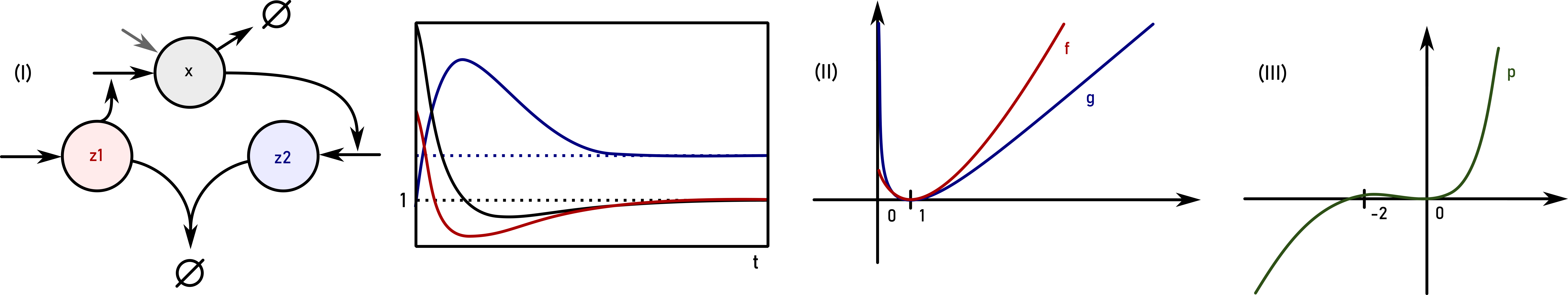}
    \caption{(I): The antithetic feedback motif and a simulation, starting from $(x(0),z_1(0),z_2(0))=(3,2,1)$ with $(k,\gamma,\theta,r,\mu, d)=10^{-1}\cdot(1,2,3,3,2,1)$. (II) and (III): Functions as used in the proof of Theorem~\ref{thm:GAS}.}
    \label{fig}
\end{figure*}

\begin{theorem}[Global asymptotic stability]
\label{thm:GAS}
For any set of parameters $\gamma,k,r,\mu,\theta$, the system \eqref{equ:ideal:IAF} has a unique equilibrium $(x^*,z_1^*,z_2^*)\in \mathbb{R}^3_{>0}$, this equilibrium is globally asymptotically stable. 
\end{theorem}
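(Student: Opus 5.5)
Existence and uniqueness of the equilibrium in $\mathbb{R}^3_{>0}$ follow directly from the computation preceding the theorem. At an equilibrium, $\dot z_1-\dot z_2=0$ forces $x^*=r/\theta$. Then $\dot x=0$ gives $z_1^*=\gamma x^*/k$, and $\dot z_1=0$ gives $z_2^*=r/(\mu z_1^*)$. So the state space I work on is the closed positive orthant $\mathbb{R}^3_{\geq 0}$.

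First I will check that $\mathbb{R}^3_{\geq 0}$ is forward invariant. On each face the vector field points inward, since $\dot x=kz_1\ge 0$ at $x=0$, $\dot z_1=r>0$ at $z_1=0$, and $\dot z_2=\theta x\ge0$ at $z_2=0$. Moreover, every trajectory enters $\mathbb{R}^3_{>0}$ immediately. The global statement is therefore about initial conditions in the orthant.

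For the Lyapunov function I would exploit the integral structure. Write $w:=z_1-z_2$, so that $\dot w=r-\theta x=-\theta(x-x^*)$. The pair $(x,w)$ then behaves like a plant with integral action, with $z_1$ entering as input. I would look for $V$ as a sum of terms:
\[
V = \frac{\theta}{2k}(x-x^*)^2 + \Phi(z_1)+\Psi(z_2) + \text{(coupling terms)},
\]
where $\Phi,\Psi$ are entropy-like, $\Phi(z_1)=z_1-z_1^*-z_1^*\ln(z_1/z_1^*)$, and similarly for $\Psi$. These terms are convex and radially unbounded on the positive orthant, and their derivatives produce factors $(1-z_1^*/z_1)(r-\mu z_1z_2)$.

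The mass-action term combines well with these entropy terms. Using $r=\mu z_1^*z_2^*$, the sum of the $\mu$-contributions from $\Phi$ and $\Psi$ reduces to expressions of the form $\mu(z_1z_2-z_1^*z_2^*)\bigl(\ln(z_1^*z_2^*)-\ln(z_1z_2)\bigr)\le0$, and the leftover terms are linear in $x-x^*$. The cross term $k(z_1-z_1^*)(x-x^*)$ coming from the quadratic in $x$ has to cancel against $\theta(x-x^*)(\ldots)$ coming from $\Psi$. Because the variables do not match ($z_1$ against $z_2$), I expect to need a coupling term such as $c\,(x-x^*)(z_1-z_2-w^*)$, or a quadratic in $w$. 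The coefficients must then be tuned so that $\dot V\le 0$ everywhere, and $V$ must stay convex.

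This tuning is the main obstacle. The plan is to write $\dot V$ as a quadratic form in $(x-x^*,\,z_1-z_1^*)$ plus the nonpositive mass-action term. I will choose the weights so that the indefinite cross terms become a perfect square with a negative sign, for example $-\gamma\frac{\theta}{k}(x-x^*)^2$, absorbing the remaining mixed terms by Young's inequality. If this fails, I fall back on a non-strict $V$ with $\dot V\le0$, accepting that strict decrease does not hold.

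Once $V$ is in hand, I conclude with the invariance step. $V$ is radially unbounded on $\mathbb{R}^3_{>0}$ and blows up at its boundary, so sublevel sets are compact and bounded away from the faces. This gives boundedness and Lyapunov stability. By the Krasovskii--LaSalle invariance principle, trajectories converge to the largest invariant set in $\{\dot V=0\}$. On that set the mass-action term vanishes, so $z_1z_2=z_1^*z_2^*$, and typically $x=x^*$ as well. Invariance then gives $\dot x=0$, hence $z_1=z_1^*$, and therefore $z_2=z_2^*$. The only invariant set is the equilibrium, which yields global asymptotic stability.
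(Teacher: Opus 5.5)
There is a genuine gap. You never produce a $V$ with $\dot V\le 0$, and the ``tuning'' you defer is the entire content of the theorem. Your fallback of a non-strict $V$ does not help, because obtaining $\dot V\le 0$ at all is the difficulty.

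The route you sketch also fails as stated. With your $\Phi(z_1)=z_1-z_1^*-z_1^*\ln(z_1/z_1^*)$ and the analogous $\Psi$, the gradients are $1-z_i^*/z_i$. Using $\theta x^*=r$ one gets
\[
\dot\Phi+\dot\Psi=\Bigl(2-\frac{z_1^*}{z_1}-\frac{z_2^*}{z_2}\Bigr)(r-\mu z_1z_2)+\theta\Bigl(1-\frac{z_2^*}{z_2}\Bigr)(x-x^*).
\]
This is positive at $x=x^*$, $z_1=100z_1^*$, $z_2=z_2^*/50$, so the mass-action part is not sign-definite.

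The form $\mu(z_1z_2-z_1^*z_2^*)\ln\bigl(z_1^*z_2^*/(z_1z_2)\bigr)$ arises for free-energy terms $z\ln(z/z^*)-z+z^*$, not for yours. Even then the leftover is a multiple of $\theta(x-x^*)\ln(z_2/z_2^*)$, which is not a quadratic form in $(x-x^*,z_1-z_1^*)$. For $V=\frac a2(x-x^*)^2+b\,\Phi+c\,\Psi$ with these free energies, on $\{z_1z_2=z_1^*z_2^*\}$ one gets
\[
\dot V=-a\gamma(x-x^*)^2+(x-x^*)\bigl[ak(z_1-z_1^*)+c\,\theta\ln(z_2/z_2^*)\bigr].
\]
Letting $z_2\to\infty$ and setting $x-x^*$ equal to the bracket divided by $2a\gamma$ gives $\dot V>0$ for all weights. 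Whether a coupling term rescues this is left completely open.

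The paper works in $\widetilde x=x/x^*$, $\widetilde z_i=z_i/z_i^*$, with $\alpha=\mu z_1^*$ and $\beta=\mu z_2^*$. It uses an entropy term in $x$ and a quadratic in $z_1$, essentially the reverse of your ansatz:
\[
\widetilde V=(\widetilde x\ln\widetilde x-\widetilde x+1)+\frac{\gamma}{2\beta}(\widetilde z_1-1)^2+\frac{\gamma}{\alpha}(\widetilde z_2-1-\ln\widetilde z_2).
\]
These weights make the controller terms collapse exactly to $-(\widetilde z_1\widetilde z_2-1)^2/\widetilde z_2$. The resulting $W=-\dot{\widetilde V}/\gamma$ is quadratic in $\widetilde z_1$. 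After minimizing over $\widetilde z_1$, nonnegativity reduces to the sharp scalar inequality $e^s-1-s-s^2/4\ge 0$ for $s=\ln\widetilde x>-2$; the case $\ln\widetilde x\le -2$ uses $\widetilde z_1>0$.

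This is not a quadratic-form/Young argument. Replacing $\widetilde x\ln\widetilde x-\widetilde x+1$ by a quadratic $h$ would require $h'+h'^2/4\le\widetilde x-1$, which fails for large $\widetilde x$.

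Your invariance step also claims more than a Lyapunov function gives here. A quadratic term in $x$ is finite on $\{x=0\}$. The paper's $\widetilde V$ stays bounded as $\widetilde x\to 0^+$ or $\widetilde z_1\to 0^+$. So its sublevel sets are neither compact in $\mathbb{R}^3_{>0}$ nor bounded away from those faces, contrary to your assertion.

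The paper closes this with three steps:
\begin{enumerate}
\item It bounds $\widetilde z_2\le\bar c_2$ on a sublevel set $M_c$.
\item It uses $\dot{\widetilde z}_1\ge\beta(1-\bar c_2\widetilde z_1)$, and then $\dot{\widetilde x}\ge\gamma(r_2-\widetilde x)$ once $\widetilde z_1\ge r_2$, together with the comparison principle.
\item This yields compact, forward-invariant, attracting sets $K_c(r_1,r_2)=\{\widetilde x\ge r_1,\ \widetilde z_1\ge r_2\}\cap M_c$ with $r_1<r_2<1/\bar c_2$, and only on these is LaSalle applied.
\end{enumerate}

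Finally, ``typically $x=x^*$'' on $\{\dot V=0\}$ must be proved. In the paper it comes from the equality case of $e^s-1-s-s^2/4\ge 0$, which holds only at $s=0$. This gives $W=0$ iff $\widetilde x=1$ and $\widetilde z_1\widetilde z_2=1$.
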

\begin{proof}
    The fact that \eqref{equ:ideal:IAF} has a unique equilibrium is well known and easy to see. For completeness: $x^*=r/\theta$, $z_1^*=\gamma r/(k\theta)$ and $z_2^*=\theta k/(\mu \gamma )$. It is also straightforward to observe that $\mathbb{R}^3_{\geq 0}$ is invariant under \eqref{equ:ideal:IAF}. In fact, $\mathbb{R}^3_{>0}$ is invariant under \eqref{equ:ideal:IAF}. 

    It pays off to transform \eqref{equ:ideal:IAF} into a system that is still invariant under $\mathbb{R}^3_{>0}$, but now has $(1,1,1)$ as its unique equilibrium. 

    To do so, consider the new coordinates $\widetilde{x}:=x/x^*$, $\widetilde{z}_1:=z_1/z_1^*$ and $\widetilde{z}_2:=z_2/z_2^*$. Also introduce the new parameters $\alpha:=\mu z_1^*$ and $\beta := \mu z_2^*$. Now, exploiting that $r=\mu z_1^*z_2^*$, it follows that 
    \begin{equation}
    \label{equ:ideal:IAF:2}
    \begin{aligned}
    \dot{\widetilde{x}}&= \gamma (\widetilde{z}_1-\widetilde{x})\\
    \dot{\widetilde{z}}_1&= \beta(1 - \widetilde{z}_1 \widetilde{z}_2)\\
        \dot{\widetilde{z}}_2&= \alpha(\widetilde{x} - \widetilde{z}_1 \widetilde{z}_2)
\end{aligned}
\end{equation}
In the remainder of the proof, we first prove Lyapunov stability, then, to be able to appeal to LaSalle's (also due to Barbashin and Krasovskii) invariance principle, we construct compact forward invariant sets that are attracting. 

\textbf{\textit{Lyapunov stability:}}
Towards studying stability, consider the following $C^{\omega}$ function $\widetilde{V}:\mathbb{R}^3_{>0}\to \mathbb{R}_{\geq 0}$ defined through,
\begin{equation}
\label{equ:Lyap}
    \widetilde{V}(\widetilde{x},\widetilde{z}_1,\widetilde{z}_2):=f(\widetilde{x})+(\gamma/2\beta) (\widetilde{z}_1-1)^2+(\gamma/\alpha) g(\widetilde{z}_2),
\end{equation}
where $\widetilde{x} \mapsto f(\widetilde{x} ) := \widetilde{x} \ln(\widetilde{x} )-\widetilde{x}+1$ and $\widetilde{z}_2\mapsto g(\widetilde{z}_2):=\widetilde{z}_2-1-\ln (\widetilde{z}_2)$. Note that both $f$ and $g$ are convex on $\mathbb{R}_{>0}$, with their unique minimizers evaluating to $0$ at $1$, see Figure~\ref{fig} (II). This implies that $\widetilde{V}$ is convex. 

Note that $\lim_{\widetilde{x}\to 0^+} \widetilde{x}\ln (\widetilde{x})-\widetilde{x}+1 = 1$ and not $+\infty$, the same is true for $\lim_{\widetilde{z}_1\to 0^+} (\widetilde{z}_1-1)^2=1$. This means that sublevel sets of $\widetilde{V}$, on $\mathbb{R}^3_{>0}$ are \textit{not} compact. We return to this below.    

Now, the claim is that along solutions of \eqref{equ:ideal:IAF:2}, 
\begin{align}
\label{equ:Vdot}
    \dot{\widetilde{V}} = - \gamma W(\widetilde{x},\widetilde{z}_1,\widetilde{z}_2)
\end{align}
for the function $W:\mathbb{R}^3_{>0}\to \mathbb{R}$ defined through
\begin{align*}
    W(\widetilde{x},\widetilde{z}_1,\widetilde{z}_2) = (\widetilde{x}-\widetilde{z}_1)\ln (\widetilde{x}) &- (1-1/\widetilde{z}_2) (\widetilde{x}-1) \\
    &+ (\widetilde{z}_1 \widetilde{z}_2 -1)^2 / \widetilde{z}_2. 
\end{align*}
Our aim is to get a grip on the sign of $W$, towards concluding on stability. 

First, see that 
\begin{align*}
    \nabla \widetilde{V}(\widetilde{x},\widetilde{z}_1,\widetilde{z}_2) = \left(\ln (\widetilde{x}),(\gamma/\beta)(\widetilde{z}_1-1),(\gamma/\alpha)(1-1/\widetilde{z}_2) \right). 
\end{align*}
Then, along solutions of \eqref{equ:ideal:IAF:2}, we get that 
\begin{align*}
    \dot{\widetilde{V}} = \gamma (\widetilde{z}_1-\widetilde{x})\ln \widetilde{x} &+ \gamma (\widetilde{z}_1-1)(1-\widetilde{z}_1\widetilde{z}_2) \\ &+ \gamma (1-1/\widetilde{z}_2)(\widetilde{x} - \widetilde{z}_1 \widetilde{z}_2)
\end{align*}
Using $(\widetilde{x} - \widetilde{z}_1 \widetilde{z}_2)=(\widetilde{x}-1)-(\widetilde{z}_1 \widetilde{z}_2-1)$ and $(1-1/\widetilde{z}_2)+(\widetilde{z}_1-1) = (\widetilde{z}_1 \widetilde{z}_2-1)/\widetilde{z}_2$ one may rewrite this into \eqref{equ:Vdot}. 

Next, we claim that indeed $W\geq 0$, with $W(\widetilde{x},\widetilde{z}_1,\widetilde{z}_2)=0$ if and only if $\widetilde{x}=1$ and $\widetilde{z}_1\widetilde{z}_2=1$. 

To that end, let $\widetilde{\zeta}_2:=1/\widetilde{z}_2$ and define $\widetilde{W}(\widetilde{x},\widetilde{z}_1,\widetilde{\zeta}_2):=W(\widetilde{x},\widetilde{z}_1,1/\widetilde{\zeta}_2)$. One obtains: 
\begin{align*}
    \widetilde{W}(\widetilde{x},\widetilde{z}_1,\widetilde{\zeta}_2) = \widetilde{x}\ln(\widetilde{x})-\widetilde{z}_1\ln (\widetilde{x}) &+ (\widetilde{\zeta}_2-1) (\widetilde{x}-1) \\
    &+ (\widetilde{z}_1 -\widetilde{\zeta}_2)^2 / \widetilde{\zeta}_2. 
\end{align*}

Now, for arbitrary but fixed $\widetilde{x}$ and $\widetilde{\zeta}_2$, we see that $\widetilde{z}_1\mapsto \widetilde{W}(\cdot,\widetilde{z}_1,\cdot)$ is quadratic. In fact,
\begin{align*}
   m^*_1 := \arg \min_{\widetilde{z}_1}\widetilde{W}(\widetilde{x},\widetilde{z}_1,\widetilde{\zeta}_2)
\end{align*}
is explicitly given by $m^*_1=\widetilde{\zeta}_2 ( 1 + (1/2)\ln (\widetilde{x}))$. That means that if $\ln(\widetilde{x})\leq -2$, we have $m^*_1\leq 0$. However, then, by convexity, $\widetilde{W}$ must be strictly increasing in $\widetilde{z}_1$, for fixed $\widetilde{x}$ and $\widetilde{\zeta}_2$, on $\mathbb{R}_{> 0}$. As
\begin{align*}
    \widetilde{W}(\widetilde{x},\widetilde{z}_1,\widetilde{\zeta}_2) > \widetilde{W}(\widetilde{x},0,\widetilde{\zeta}_2) = (\widetilde{x}\ln (\widetilde{x}) - \widetilde{x} +1) + \widetilde{\zeta}_2 \widetilde{x} > 0, 
\end{align*}
we can conclude that $\widetilde{W}$ is strictly positive for any triple $(\widetilde{x},\widetilde{z}_1,\widetilde{\zeta}_2)\in \mathbb{R}^3_{>0}$ such that $\ln(\widetilde{x})\leq -2$. 

Then, the other case to check is triples such that $\ln(\widetilde{x})> -2$. We first check $\widetilde{W}(\widetilde{x},m^*_1,\widetilde{\zeta}_2)$. As $m^*_1\ln (\widetilde{x}) = \widetilde{\zeta}_2 \ln(\widetilde{x}) + (\widetilde{\zeta}_2/2)\ln(\widetilde{x})^2$ and $(m^*_1-\widetilde{\zeta}_2)^2/\widetilde{\zeta}_2 = (\widetilde{\zeta}_2/4) \ln(\widetilde{x})^2$ it follows that 
\begin{align*}
    \widetilde{W}(\widetilde{x},\widetilde{z}_1,\widetilde{\zeta}_2) &\geq \widetilde{W}(\widetilde{x},m^*_1,\widetilde{\zeta}_2)\\ &= (\widetilde{x}\ln(\widetilde{x})-\widetilde{x}+1) + \widetilde{\zeta_2}p(\ln(\widetilde{x}))
\end{align*}
for $s\mapsto p(s) := e^s - 1 - s - (1/4)s^2$. One can check that $p(s)\geq 0$ for all $s\geq -2$, with $p$ evaluating to $0$ only at $0$ (on this domain), see Figure~\ref{fig} (III). This closes the second case, showing that $\widetilde{W}\geq 0$ (and thus $W\geq 0$). Looking back at \eqref{equ:Vdot}, we can conclude on Lyapunov stability. 

\textit{\textbf{LaSalle's invariance principle:}}
Now, it readily follows that $W(\widetilde{x},\widetilde{z}_1,\widetilde{z}_2)=0$ if $\widetilde{x}=1$ and $\widetilde{z}_1\widetilde{z}_2=1$. On the other hand, suppose that $W(\widetilde{x},\widetilde{z}_1,\widetilde{z}_2)=0$, that implies $\ln(\widetilde{x})>-2$ and so $\widetilde{W}(\widetilde{x},m^*_1,\widetilde{\zeta}_2)=0$, meaning that $\ln(\widetilde{x})=0$ and thus $\widetilde{x}=1$. Thereby $m^*_1=\widetilde{\zeta}_2$ and thus $m^*_1 \widetilde{z}_2 =1$. Clearly, the largest invariant set satisfying these conditions is $(1,1,1)$. 

Below we clarify that we can appeal to LaSalle's invariance principle to handle these zeroes of $W$.

Although we cannot exploit compactness of sublevel sets, we can show that 
\begin{align*}
    M_c := \{ (\widetilde{x},\widetilde{z}_1,\widetilde{z}_2)\in \mathbb{R}^3_{>0}: \widetilde{V}(\widetilde{x},\widetilde{z}_1,\widetilde{z}_2)\leq c\}
\end{align*}
 is bounded for any $c>0$. One exploits that $\widetilde{V}$ is separable and $\widetilde{V}\leq c$ implies that $f(\widetilde{x})\leq c$, $(\widetilde{z}_1-1)^2 \leq (2\beta/\gamma) c$ and $g(\widetilde{z}_2)\leq (\alpha/\gamma) c$, \textit{e.g.}, as $g$ is coercive on $\mathbb{R}_{>0}$, $g(\widetilde{z}_2)= (\alpha/\gamma) c$ has two different solutions for $c > 0$. We denote the largest one by $\bar{c}_2$. 

 Then, given such an upper bound for the $\widetilde{z}_2$ components of $M_c$, select scalars $r_1$ and $r_2$ such that $0<r_1<r_2<1/\bar{c}_2$. Next, define $K_c(r_1,r_2) := \{ (\widetilde{x},\widetilde{z}_1,\widetilde{z}_2)\in M_c: \widetilde{x}\geq r_1,\widetilde{z}_1\geq r_2\}$. Then we claim that $K_c(r_1,r_2)$ is a compact set that is both forward invariant and attracting. 

 By construction, $K_c(r_1,r_2)$ is a closed set, contained in a bounded set, \textit{e.g.}, with $\bar{c}_2$ the bound on the $\widetilde{z}_2$ components. Hence, $K_c(r_1,r_2)$ is compact. 

 The set $M_c$ is forward invariant, so towards forward invariance of $K_c(r_1,r_2)$ we need to show that the lower bounds imposed via $r_1$ and $r_2$ are respected.

 Let $(t,\xi)\mapsto \widetilde{\varphi}^t(\xi)$ denote a solution to \eqref{equ:ideal:IAF:2}, with $\widetilde{\varphi}^t_i(\xi)$ denoting its $i$-th component.
 We claim that $\widetilde{\varphi}^t_2(\xi)\geq r_2$ for all $t\geq 0$. Suppose not, then there is a $t_2>0$ such that $\widetilde{\varphi}^{t_2}_2(\xi)< r_2$. Define $\tau_2:=\sup\{ t\in [0,t_2]:\widetilde{\varphi}_2^t(\xi)\geq r_2\}$, which is well defined as the flow is continuous and we assume to start in $K_c(r_1,r_2)$, \textit{i.e.}, $\xi\in K_c(r_1,r_2)$. Then, $\widetilde{\varphi}^{\tau_2}_2(\xi)=r_2$ and $\dot{\widetilde{z}}_1(\tau_2)\leq 0$. However, 
 \begin{align}
 \label{equ:bound_z1}
     \dot{\widetilde{z}}_1(\tau_2) = \beta(1-r_2 \widetilde{\varphi}_3^{\tau_2}(\xi))\geq \beta(1-r_2 \bar{c}_2)>0,   
 \end{align}
 which is a contradiction. 
 
 Second, we claim that $\widetilde{\varphi}^t_1(\xi)\geq r_1$ for all $t\geq 0$. Again, suppose not, then there is a $t_1>0$ such that $\widetilde{\varphi}^{t_1}_1(\xi)< r_1$. Define again $\tau_1:=\sup\{ t\in [0,t_1]:\widetilde{\varphi}_1^t(\xi)\geq r_1\}$ such that $\widetilde{\varphi}^{\tau_1}_1(\xi)=r_1$ and $\dot{\widetilde{x}}(\tau_1)\leq 0$, but then
 \begin{align}
 \label{equ:bound_x}
     \dot{\widetilde{x}}(\tau_1) = \gamma(\widetilde{\varphi}_2^{\tau_1}(\xi)-r_1)\geq \gamma(r_2-r_1)>0,
 \end{align}
 which is another contradiction. 

 Next, we claim that $K_c(r_1,r_2)$ is attracting from $M_c$. Exploiting what led to~\eqref{equ:bound_z1} and the comparison principle for ODEs \cite[p. 17]{ref:ArnoldODE}, obtain:
\begin{align*}
    \widetilde{\varphi}^t_2(\xi)  \geq 1/\bar{c}_2 + (\xi_2-1/\bar{c}_2) e^{-t\beta \bar{c}_2}\quad \forall t\geq 0. 
\end{align*}
This implies that there is a $T_2>0$ such that $\widetilde{\varphi}^t_2(\xi)\geq r_2$ for all $t\geq T_2$. Similarly, exploiting what led to~\eqref{equ:bound_x} and considering $t\geq T_2$, then $\widetilde{\varphi}^t_1(\xi) \geq r_2 + (\widetilde{\varphi}^{T_2}_1(\xi)-r_2)e^{-\gamma (t-T_2)}$. As $r_1<r_2$, this implies that there is a $T_1>T_2$ such that $\widetilde{\varphi}_1^t(\xi)\geq r_1$ for all $t\geq T_1$. 

Then, to conclude on global asymptotic stability, we apply LaSalle's invariance principle (\textit{e.g.}, see \cite[\S 5.4]{Sastry}) with $K_c(r_1,r_2)$ understood as the domain. 
\end{proof}

The proof of Theorem~\ref{thm:GAS} constructs a Lyapunov function for \eqref{equ:ideal:IAF} in new coordinates. In the original coordinates, the Lyapunov function $V:\mathbb{R}^3_{>0}\to \mathbb{R}_{\geq 0}$ becomes: 
\begin{equation}
    \label{equ:Lyap:main}
    \begin{aligned}
    V(x,z_1,z_2) := &\left(\frac{x}{x^*}\ln\left(\frac{x}{x^*}\right)-\frac{x}{x^*}+1\right)
    +\frac{\gamma^2}{2\theta k}\left(\frac{z_1}{z_1^*}-1\right)^2\\
&+\frac{\theta k}{\mu r}\left(\frac{z_2}{z_2^*}-1-\ln\left(\frac{z_2}{z_2^*}\right)\right).       
    \end{aligned}
\end{equation}
Perhaps surprisingly, this function is convex, but not a finite-degree polynomial.  

\begin{remark}[Adding disturbances to the analysis]
\label{rem:d}
\upshape{
One may re-introduce $d\in \mathbb{R}_{\geq 0}$, as long as $\theta d< r \gamma $. In this case, one can proceed like in the proof of Theorem~\ref{thm:GAS}, yet with an offset in the $\widetilde{x}$ dynamics, that is, 
\begin{align*}
    \dot{\widetilde{x}} = \gamma_d (\widetilde{z}_1-\widetilde{x}) - \widetilde{d}(\widetilde{x}-1)
\end{align*}
for $\widetilde{d}:=d/x^*$ and $\gamma_d:=\gamma-\widetilde{d}$. Looking at \eqref{equ:Lyap}, with $\gamma$ replaced by $\gamma_d$, see that the addition of $-\widetilde{d}\ln(\widetilde{x})(\widetilde{x}-1)$ to the RHS of \eqref{equ:Vdot} does not cause any complications as the term is strictly negative for any $\widetilde{x}\neq 1$. Regarding the second part of the proof, also \eqref{equ:bound_x} carries over as $r_1<1$ (since $r_1< 1/\bar{c}_2$, $g$ attains its minimum at $1$ and thus $1/\bar{c}_2< 1$). The same is true for the attraction step. Note that for larger disturbances there is no positive equilibrium. 
}
\exampleEnd
\end{remark}

Although Theorem~\ref{thm:GAS} holds for any choice of parameters, one cannot extend the result to any interaction that matches the motif from Figure~\ref{fig} (I). For instance, if we replace the degradation term $-\gamma x$ by the saturated degradation term $-\gamma_1 x/(\gamma_2+x)$, the equilibrium can become unstable and one will observe oscillations. This also means that structural results like \cite{ref:blanchini2017polyhedral} do not apply and a different form of a ``generic" robustness analysis needs to be found. See \cite{ref:proverbio2025robustness} for a recent and more extensive discussion in this direction. 

\section{Discussion and conclusion}
The deceptively simple dynamical system~\eqref{equ:ideal:IAF} seems to resist a simple global stability analysis; as shown in this note, a more involved analysis is possible. Towards the global analysis of more interesting controllers, a more principled analysis would be welcome. Some directions are explored in \cite{ref:margaliot2025compact,ref:wafi2026boundedness}.    

\printbibliography

@article{ref:wafi2026boundedness,
  title={Boundedness of solutions in feedback systems with antithetic controllers},
  author={Wafi, Moh Kamalul and de Oliveira, Arthur CB and Sontag, Eduardo D},
  journal={arXiv preprint arXiv:2604.27290},
  year={2026}
}

@INPROCEEDINGS{ref:WJCDC26,
  author={Jongeneel, Wouter and Scolamiero, Martina and Johansson, Karl H.},
  booktitle={Proc. IEEE CDC}, 
  title={The p53-{M}dm2 interaction through the lens of antithetic feedback}, 
  year={2026},
  volume={},
  number={},
  pages={1-7}}

@inproceedings{ref:agrawal2019some,
  title={Some remarks on robust gene regulation in a biomolecular integral controller},
  author={Agrawal, Deepak K and Marshall, Ryan and Al-Radhawi, M Ali and Noireaux, Vincent and Sontag, Eduardo D},
  booktitle={Proc. IEEE CDC},
  pages={2820--2825},
  year={2019}
}

@article{ref:olsman2019hard,
  title={Hard limits and performance tradeoffs in a class of antithetic integral feedback networks},
  author={Olsman, Noah and Baetica, Ania-Ariadna and Xiao, Fangzhou and Leong, Yoke Peng and Murray, Richard M and Doyle, John C},
  journal={Cell systems},
  volume={9},
  number={1},
  pages={49--63},
  year={2019},
  publisher={Elsevier}
}

@inproceedings{ref:olsman2020antithetic,
  title={Antithetic integral feedback for the robust control of monostable and oscillatory biomolecular circuits},
  author={Olsman, Noah and Forni, Fulvio},
  booktitle={Proc. IFAC World Congress},
  volume={53},
  number={2},
  pages={16826--16833},
  year={2020},
  publisher={Elsevier}
}

@article{ref:proverbio2025robustness,
  title={Robustness and resilience of dynamical networks in biology and epidemiology},
  author={Proverbio, Daniele and Katz, Rami and Giordano, Giulia},
  journal={Foundations and Trends{\textregistered} in Systems and Control},
  volume={12},
  number={2-3},
  pages={112--345},
  year={2025},
  publisher={Emerald Publishing Limited}
}

@article{ref:blanchini2017polyhedral,
  title={Polyhedral {L}yapunov functions structurally ensure global asymptotic stability of dynamical networks iff the {J}acobian is non-singular},
  author={Blanchini, Franco and Giordano, Giulia},
  journal={Automatica},
  volume={86},
  pages={183--191},
  year={2017},
  publisher={Elsevier}
}

@inproceedings{ref:margaliot2025compact,
  title={Compact attractors of an antithetic integral feedback system have a simple structure},
  author={Margaliot, Michael and Wu, Chengshuai and Sontag, Eduardo D},
  booktitle={Proc. IEEE CDC},
  pages={2880--2885},
  year={2025}
}

@book{ref:ArnoldODE,
  title={Ordinary Differential Equations},
  author={Vladimir I. Arnold},
  year={1973},
  publisher={MIT Press},
  address={Cambridge}
}

@article{ref:briat2016antithetic,
  title={Antithetic integral feedback ensures robust perfect adaptation in noisy biomolecular networks},
  author={Briat, Corentin and Gupta, Ankit and Khammash, Mustafa},
  journal={Cell systems},
  volume={2},
  number={1},
  pages={15--26},
  year={2016},
  publisher={Elsevier}
}

@book{Sastry,
  title={Nonlinear Systems},
  author={Shankar Sastry},
  year={1999},
  publisher={Springer},
  address={New York}
}
\end{document}